\documentclass[notitlepage,aps,pra,twocolumn,groupaddress,10pt]{revtex4-1}

\usepackage[english]{babel}

\usepackage{graphicx}
\usepackage[colorlinks=true, allcolors=blue]{hyperref}

\usepackage{stmaryrd}
\usepackage{amssymb,amsmath,amsthm,amsfonts,amsbsy}
\usepackage{bm,bibunits,color,chngcntr,epsfig,epstopdf,graphicx,dsfont}
\usepackage{hyperref,lipsum,,makecell,mathrsfs,rotating,setspace}
\usepackage[english]{babel}
\usepackage[normalem]{ulem}

\newcommand{\be}{\begin{equation}}
\newcommand{\ee}{\end{equation}}
\newcommand{\bea}{\begin{eqnarray}}
\newcommand{\eea}{\end{eqnarray}}
\newcommand{\ket}{\rangle}

\newcommand{\I}{\mathds{1}}
\newcommand{\ra}{\rightarrow}

\newcommand{\ba}{\begin{align}}
\newcommand{\ea}{\end{align}}

\usepackage{tikz}
\usetikzlibrary{shapes,arrows,positioning,fit,calc}

\renewcommand{\ket}[1]{|#1\rangle}

\begin{document}

\title{Quantum conditional mutual information and channel capacity}
\author{Dong-Sheng Wang}
\affiliation{Institute of Theoretical Physics, Chinese Academy of Sciences, Beijing 100190, China \\
School of Physical Sciences, University of Chinese Academy of Sciences, Beijing 100049, China}

\newtheorem{theorem}{Theorem}
\newtheorem{prop}[theorem]{Proposition}
\newtheorem{corollary}[theorem]{Corollary}
\newtheorem{open problem}[theorem]{Open Problem}
\newtheorem{conjecture}[theorem]{Conjecture}
\newtheorem{definition}{Definition}
\newtheorem{remark}{Remark}
\newtheorem{example}{Example}
\newtheorem{task}{Task}
\newtheorem{protocol}{Protocol}

\begin{abstract}
Information measures acquire operational meaning through coding theorems. The quantum conditional mutual information (QCMI) is nonnegative due to strong subadditivity, yet a direct connection with channel coding has remained elusive. In this work, we propose a quantum communication task—conditional quantum communication—that fills this gap. We show that the optimal rate for establishing quantum correlation between two parties, assisted by a third system, is given by half the QCMI. This result naturally extends the classical key generation capacity of Csiszár and Ahlswede to the quantum domain. We place our model within the family tree of quantum protocols and compute the conditional capacity for several example channels. Our results provide new insights for code design in reliable quantum information processing.
\end{abstract}
\date{\today}

\maketitle

\begin{spacing}{1.0}

\section{Introduction}

Information measures lie at the heart of information theory, 
and they acquire operational meaning through Shannon's coding theorems~\cite{Wat18,Wil17,Hay17}.
Entropy and mutual information are two seminal examples.
Another such measure is conditional mutual information.
In the quantum case, a celebrated result is the strong subadditivity of von Neumann entropy,
which implies that the quantum conditional mutual information (QCMI)
is nonnegative~\cite{LR73}.

Quantum coding models can be broadly classified into three types: 
source coding, channel coding, and state conversion.
The latter two are related via the state-channel duality,
originally formed in the classical setting and flourished in the quantum case 
by Devetak and coauthors~\cite{Dev05d},  
extending Choi's isomorphism~\cite{Cho75}.
In these models, the entropy $S(A)$ bounds the compression rate for source coding,
while the mutual information $I(A:B)$ between input $A$ and output $B$
bounds the coding rate through a channel.
The classical conditional entropy $S(A|B)$ bounds the rate for conditional 
source coding, as shown by Slepian and Wolf~\cite{CT06}.
Its quantum version, known as state merging~\cite{HOW07}, reveals a key quantum feature: 
the quantum conditional entropy can be negative,
and its negative value—the coherent information—serves as the rate for generating ebits.
This coherent information also characterizes the standard quantum channel capacity~\cite{SN96,Llo97,BNS98,Dev05}
and its dual task, entanglement distillation~\cite{DW05}.

To make the nonnegative quantum mutual information (QMI) a capacity measure, 
one can use entanglement-assisted (EA) models.
These yield the EA channel capacities~\cite{BSS+99,BSST02,BDH+14}
and coherent merging, leading to the family tree of quantum coding protocols~\cite{DHW08,ADHW09,DHW04,DH11}.
Recently, state-adaptive (SA) coding models~\cite{Wang26} showed that the SA capacity is equivalent to the EA capacity.
This highlights an intriguing relation: a known quantum state admits a classical description,
and the quantum capacity becomes half of the corresponding classical capacity in the EA or SA setting.

In this work, we focus on the conditional mutual information (CMI).
Classically, CMI appears in many tasks, such as broadcast channels,
channels with side information, 
and key generation~\cite{CT06,CK11}.
In the quantum case, QCMI has found operational meaning in state redistribution~\cite{LD06,DY08,YD09,Opp08,DHO16,YBW08},
aligning with source coding and state merging,
quantum broadcast channels~\cite{YHD11,QSW18},
and also in other tasks~\cite{PEW17,BBM+18,Pereg19,SWW20}. 
However, a direct relation with channel coding has been missing: 
specifically, a capacity given by $\max I(A:B|C)$ with $A$ as the input, $B$ as the output,
and $C$ as a conditional quantum system.
In this work, we solve this problem by introducing a model we call \emph{conditional quantum communication}.
We also analyze its relation to known tasks such as entanglement swapping and quantum repeaters~\cite{BDCZ98}. 

Our model closely follows the key generation models of Csisz\'ar and others~\cite{AC93,AC98,CK11}. 
In those classical models, the goal is to generate a key between $A$ and $B$
while keeping $C$ decoupled.
In the original formulation, a broadcast channel $A \to B \times C$ is considered,
and the variable $C$ is assumed to be known to $A$ or $B$ 
in order to achieve the CMI capacity.
We develop the quantum version of this setting and 
also analyze its relation to private capacity for wiretap channels~\cite{Wyner75}.

The remainder of the paper is organized as follows. 
Section~\ref{sec:prelim} reviews the definition and properties of QCMI.
Section~\ref{sec:model} defines our model 
and presents the main theorem with proof. 
Section~\ref{sec:relations} connects the conditional capacity to other capacities 
in the protocol family tree. 
Section~\ref{sec:examples} evaluates the conditional capacity for concrete channels. 
Section~\ref{sec:conclusion} concludes with perspectives on code design 
and applications.

\section{Preliminaries}\label{sec:prelim}

\subsection{Quantum conditional mutual information}

In this work, we use $A$, $B$, $C$ as labels for quantum or classical systems,
and $S(A)$ denotes the von Neumann entropy of system $A$, regardless of its classical or quantum nature.
The context will make clear whether a system is classical or quantum.

For a bipartite quantum state $\rho_{AB}$, 
the quantum mutual information (QMI) is defined as
\begin{equation}
I(A:B) = S(A) + S(B) - S(AB).
\end{equation}
The coherent information is $I_c(A\rangle B)=S(B)-S(AB)=-S(A|B)$, 
which can be negative.

For a tripartite quantum state $\rho_{ABC}$, 
the quantum conditional mutual information (QCMI) is defined as
\begin{equation}
I(A:B|C) = S(AC) + S(BC) - S(C) - S(ABC).
\end{equation}
A fundamental property is strong subadditivity, which states that QCMI is nonnegative:
\begin{equation}
I(A:B|C) \ge 0.
\end{equation}
Equality holds if and only if the state is a quantum Markov chain~\cite{HJPW04}.
Such a state takes the form
\begin{equation}
\rho_{ABC} = \bigoplus_j q_j \; \rho_{A c_j^L}^{(j)} \otimes \rho_{c_j^R B}^{(j)},
\end{equation}
with a decomposition of the $C$ system as
$\mathcal{H}_C = \bigoplus_j \mathcal{H}_{c_j^L} \otimes \mathcal{H}_{c_j^R}$,
where $\{q_j\}$ is a probability distribution, 
$\rho_{A c_j^L}^{(j)}$ is a state on $\mathcal{H}_A \otimes \mathcal{H}_{c_j^L}$, 
and $\rho_{c_j^R B}^{(j)}$ is a state on $\mathcal{H}_{c_j^R} \otimes \mathcal{H}_B$.
In other words, conditioned on a classical label $j$ obtained by measuring the $C$ system in orthogonal subspaces, the systems $A$ and $B$ become independent.

A deep connection exists between QCMI and the Petz recovery map \cite{Pet86,OP93,SW15}. 
Given a state $\rho$ and a channel $\Phi$,
the Petz map $\mathcal{R}_{\rho,\Phi}$ is defined by
\begin{equation}
\mathcal{R}_{\rho,\Phi}(\sigma) = \rho^{1/2} \Phi^\dagger\bigl(\Phi(\rho)^{-1/2} \sigma \Phi(\rho)^{-1/2}\bigr) \rho^{1/2}.
\end{equation}
For any tripartite state $\rho_{ABC}$, there exists a Petz map $\mathcal{R}_{C\to AC}$ 
such that
\begin{equation}
F\bigl(\rho_{ABC},\; \mathcal{R}_{C\to AC}(\rho_{BC})\bigr) \ge 2^{-\frac12 I(A:B|C)},
\end{equation}
where $F(\rho,\sigma)=\|\sqrt{\rho}\sqrt{\sigma}\|_1^2$ is the fidelity. 
In a communication setting, the Petz map can be used by $C$ (potentially an eavesdropper) to extract information from $AB$.
Thus, to ensure security, the QCMI $I(A:B|C)$ needs to be maximized.

For multipartite systems, QCMI obeys a subadditivity property under composition:
\begin{equation}
I(A_1A_2:B_1B_2|C_1C_2) \le I(A_1:B_1|C_1) + I(A_2:B_2|C_2),
\end{equation}
which follows from strong subadditivity and the chain rule for mutual information.
It also satisfies the data-processing inequality (DPI):
\begin{equation}
I(A:B|C) \ge I(A:\mathcal{E}(B)|C)
\end{equation}
for any quantum channel $\mathcal{E}$ on system $B$ (or $A$).
Moreover, QCMI is continuous under small perturbations of the state,
satisfying Fannes-type inequalities.
These properties will be used in the coding theorem.

\subsection{Examples of QCMI}

\begin{figure}
    \centering
    \includegraphics[width=0.35\textwidth]{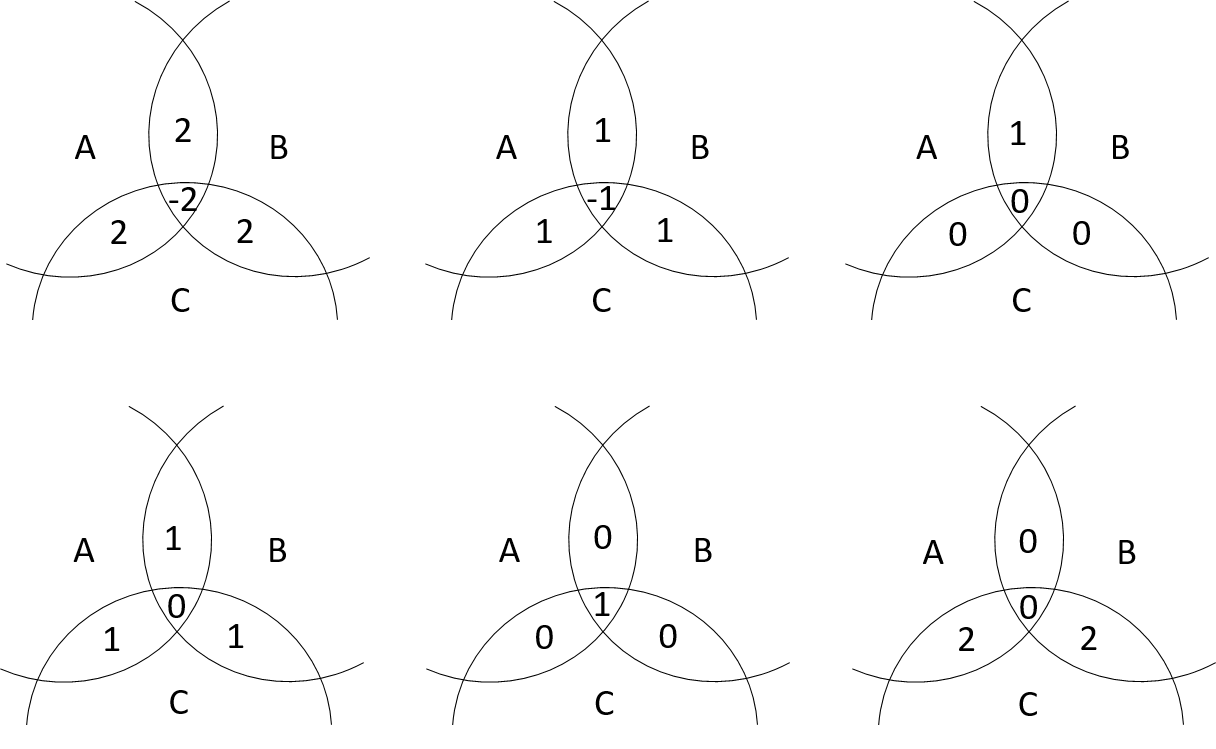}
    \caption{Venn diagram for the mutual information of 
    the six states considered in this paper: $\psi_1$ to $\psi_6$
    (from left to right, top to bottom).
    }
    \label{fig:state}
\end{figure}

We consider six paradigmatic tripartite states
and analyze their QCMI,
as well as the ordinary QMI and the tripartite information 
$I_3(A:B:C)=I(A:B)-I(A:B|C)$.  
All systems are taken to be qubits ($d=2$) except 
where a higher-dimensional register is required to label an orthogonal set of Bell states.  
There are pure and mixed states, yet for simplicity we use $\psi$ to denote a state.
The six states are defined as
\bea 
\psi_1 & =& \frac{1}{4} \sum_{i=1}^{4} |\omega_i\rangle\langle\omega_i|_{AB} \otimes |i\rangle\langle i|_C , \\ 
\psi_2 & =& \frac{1}{2}\bigl( \rho_{\Phi} \otimes |0\rangle\langle0|_C + \rho_{\Psi} \otimes |1\rangle\langle1|_C \bigr),\\
\psi_3 & =& \rho_{\Phi} \otimes |0\rangle\langle0|_C ,\\
\psi_4 & =& \frac{1}{\sqrt{2}} \bigl( |000\rangle + |111\rangle \bigr),\\
\psi_5 & =& \frac{1}{2}\bigl( |000\rangle\langle000| + |111\rangle\langle111| \bigr),\\
\psi_6 & =& \frac{1}{2} \sum_{i=1}^{4} |\omega_i\rangle_{AB} \otimes |i\rangle_C.
\eea
for $\{|\omega_i\rangle\}$ as the four orthogonal Bell states, 
$\rho_{\Phi} = \frac{1}{2}(|00\rangle\langle00|+|11\rangle\langle11|)$ and
$\rho_{\Psi} = \frac{1}{2}(|01\rangle\langle01|+|10\rangle\langle10|)$ as decohered Bell states,
`debits' (also known as common randomness).
The values are shown in Fig.~\ref{fig:state}. 

We see that $\psi_{1,2,3,4}$ have positive $I(A:B|C)$.
The state $\psi_1$ is optimal and it is the state used in entanglement swapping after the Bell measurement.
The $\psi_6$ is the state before the measurement, yet it has zero $I(A:B|C)$. 
For it $C$ is maximally entangled with $AB$ so $I(A:C)=I(B:C)=2$,
and it only has 2-body correlation.
It needs to measure $C$ in the right basis in order to extract an ebit on $AB$.
On the contrary, all 2-body correlation of $\psi_1$ is zero, 
and its $I_3(A:B:C)$ is optimal. 
Moreover, no matter how many Bell states are mixed, the $I(A:B|C)$ in it is always 2. 

The state $\psi_2$ is a mixture of debits,
and  $\psi_3$ is a single debit. 
Yet $\psi_2$ does not have 2-body correlation and it has 3-body $I_3(A:B:C)$,
with values half of those in $\psi_1$. 
The state $\psi_1$ resp. $\psi_2$ is the ideal state of conditional ebits resp. debits. 

The state $\psi_4$ is a GHZ state, 
and it has 2-body correlation without 3-body one. 
Actually, for any pure state, $I_3=0$, 
which means it is not a measure of genuine three-way entanglement.
The $X$ basis measurement on GHZ can be used to extract an ebit on $AB$. 
The state $\psi_5$, as the decohered GHZ state,
turns the quantum correlations into a classical shared random bit: 
$I(A:B)=I(A:C)=I(B:C)=1$, but $I(A:B|C)=0$, 
leading to $I_3=1$-the only case with strictly positive $I_3$, 
signifying one bit of common randomness shared among all three parties. 
It is also interesting to compare $\psi_5$ with $\psi_2$,
with the latter sharing a private key against $C$.

\section{Conditional quantum communication}\label{sec:model}

\subsection{Csisz\'ar's  key generation protocol}

In this section, we define the model for conditional quantum communication
and prove its capacity. 
To prepare for this, we first recall the state-channel duality,
also called source-channel duality. 
It basically states that a channel coding task can simulate a state conversion task
with the same rate, and vice versa. 

\begin{theorem}
[Source-channel duality~\cite{DHW08}] 
A channel coding task $\Phi^n \ra \Omega^k$ and 
a state conversion task $\rho^n \ra \tau^k$ for the channel $\Phi$ resp. $\Omega$
corresponding to the state $\rho$ resp. $\tau$
can simulate each other at the same rate $k/n$.
\end{theorem}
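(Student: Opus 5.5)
The plan is to prove the two directions separately, using the Choi-type correspondence between a channel $\Phi_{A'\to B}$ and the state $\rho_{AB} = (\mathrm{id}_A\otimes\Phi)(\phi_{AA'})$. Here $\phi_{AA'}$ is a purification of the input ensemble; more generally it is any input state used by the code. For a resource-inequality statement, the correspondence is made at the level of protocols, in the sense of the family-tree framework of \cite{DHW08}. We write $\Omega$ and $\tau$ for the target resources. These are typically ideal ones: a noiseless identity channel $\mathrm{id}^k$ together with its maximally entangled state $\Phi^{\otimes k}$, or their conditional analogues such as $\psi_1$ of Fig.~\ref{fig:state}.

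\emph{Channel $\Rightarrow$ state.} Suppose we have a channel code $(\mathcal{E},\mathcal{D})$ that simulates $\Omega^{k}$ from $\Phi^{n}$ with error $\epsilon$. We feed into the encoder one half of a purification of the target input, namely the state whose image under $\Omega$ is $\tau$. The encoder, $n$ uses of $\Phi$, and the decoder then output a state $\epsilon$-close to $\tau^{\otimes k}$. We must still show that the $n$ channel uses can be replaced by $n$ copies of $\rho$. The natural route is to restrict to codes whose encoder input is typical for $\phi^{\otimes n}$, and then invoke teleportation-like substitution via postselection. A cleaner alternative is to interpret ``state conversion'' as the conversion task with the sender holding the purifying system. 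Then $\rho^{\otimes n}$ is literally the output of $\Phi^{\otimes n}$ on $\phi^{\otimes n}$, and random-coding arguments applied to $\rho^{\otimes n}$ reproduce those applied to $\Phi^{\otimes n}$. Accordingly, I would first derandomize the channel code so that it uses the i.i.d. input $\phi^{\otimes n}$ on average, and then use the resulting typical-subspace and one-shot decoupling estimates, which depend only on $\rho$.

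\emph{State $\Rightarrow$ channel.} Suppose a protocol converts $\rho^{\otimes n}$ into something $\epsilon$-close to $\tau^{\otimes k}$ by local operations at the sender and receiver. The sender generates $\phi^{\otimes n}$ locally and sends its $A'$ halves through $\Phi^{\otimes n}$, so the two parties then share $\rho^{\otimes n}$. Running the conversion yields $\tau^{\otimes k}$, and teleportation or a ``channel from state'' step, for the ideal $\Omega$ whose Choi state is $\tau$, gives $\Omega^{k}$. Any classical communication this step needs must be accounted for. Either it is already among the resources allowed in the relevant model, or we appeal to the fact that for ideal resources the forward communication can be absorbed. This is the content of the resource calculus of \cite{DHW08}.

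\emph{Main obstacle.} The delicate step is the first direction. A general channel code need not use i.i.d. inputs, so one must show that, asymptotically, the optimal rate is achieved by codes whose input is $\phi^{\otimes n}$-like. I would first try to establish this at the level of rates, by tying both optimal rates to the same entropic single-letter formula. Specifically, I would show that both the channel rate and the state rate are bounded above by the same quantity evaluated on $\rho$, using the data-processing inequality and subadditivity from Sec.~\ref{sec:prelim}, and that both are achieved by the same random code built from $\rho$. Equality of the rates $k/n$ then follows without an operational simulation for non-i.i.d. inputs, and the error is controlled by the Fannes-type continuity from Sec.~\ref{sec:prelim}.
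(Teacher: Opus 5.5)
The paper gives no proof of this statement; it defers to Devetak--Harrow--Winter. Measured against that framework, your outline has a genuine gap in each direction.

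\emph{State $\Rightarrow$ channel.} The easy half is fine: feeding $A'^n$ of $\phi^{\otimes n}$ through $\Phi^{\otimes n}$ does produce $\rho^{\otimes n}$. But you must still turn $\tau^{\otimes k}$ into $k$ uses of $\Omega$. For ebits and a qubit channel this is teleportation, which costs $2k$ forward cbits. The conversion $\rho^{\otimes n}\to\tau^{\otimes k}$ itself generally needs forward communication too; hashing, for instance, uses $I(A:E)$ bits per copy. None of this is available in the unassisted task $\Phi^n\to\Omega^k$, and no general principle ``absorbs'' it. For the quantum target the communication can be removed only through separate capacity theorems: forward classical communication does not increase $Q$, and the entanglement-generation capacity equals $Q$. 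These theorems equate optimal rates through the regularized coherent information; they do not supply a simulation. For classical targets the reduction fails outright. Take $\Omega$ a noiseless bit channel and $\tau$ a debit. With free communication the state task is trivial. Without communication its rate is the G\'acs--K\"orner common information, which is zero for the input--output distribution of a binary symmetric channel of positive capacity.

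\emph{Channel $\Rightarrow$ state.} Your fallback of matching single-letter formulas cannot work, for three reasons.
\begin{itemize}
\item A channel code chooses its own inputs, so its rate is not bounded by anything evaluated on a fixed $\rho$. If $\phi_{AA'}$ is a product pure state, $\rho_{AB}$ is a product state from which nothing can be distilled, yet $\Phi$ may have positive capacity.
\item Even after optimizing over $\rho$, there is no single-letter converse in general. Coherent information is superadditive, and good codes use inputs entangled across channel uses.
\item Postselecting $A^n$ onto a (transposed) code subspace succeeds only with probability about $2^{k-nS(A)}$.
\end{itemize}

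The missing idea is what the cited duality actually uses. First, one works with the relative resource $\langle\Phi:\rho\rangle$, i.e.\ codes whose average input is $\rho^{\otimes n}$-like. Getting such codes requires randomizing over unitaries commuting with $\rho^{\otimes n}$, not derandomizing. Second, on each type class $\phi_\rho^{\otimes n}$ is maximally entangled. By the transpose trick $(M\otimes\I)|\Phi^{+}\rangle=(\I\otimes M^{T})|\Phi^{+}\rangle$, an encoder applied to $A'^n$ before the channel is then equivalent to an operation on the purifying system $A^n$ after the state is shared, and conversely.

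This exchange does not preserve the auxiliary resources, because the index of the code subspace inside the type class must be announced. That is why $\Phi\ge I_c(A\rangle B)\,q$ is dual to $\rho+I(A:E)\,c\ge I_c(A\rangle B)\,e$. It is also why pre-shared ebits and quantum communication trade places between the EA distillation and EA channel entries of Table~\ref{tab:family_tree}. The duality holds in this resource-bookkeeping sense. A literal same-rate, two-way simulation of $\Phi^n\to\Omega^k$ and $\rho^n\to\tau^k$ with no auxiliary resources, which is what your outline sets out to prove, is already ruled out by the classical example above.
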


We do not state the proof here for simplicity. 
Note that it does not claim a channel resource can be converted into a state resource, and vice versa:
instead, it states that given a rate there is a channel coding task, as well as a state conversion task, 
to achieve it. 
We next introduce Csisz\'ar's key generation protocol, 
which has a source version and a channel version, 
while here we use the source version, for convenience. 

\begin{theorem}
[Key generation~\cite{AC93}] For a $n$-bit source, denoted $(A^n,B^n,C^n)$, the key generation task 
is to generate a key between $A$ and $B$ secret from $C$ allowing public communication,  
and the capacity $C_K$ is bounded by 
\be I(A:B)-\min[I(A:C),I(B:C)]\leq C_K \leq I(A:B|C). \ee 
\end{theorem}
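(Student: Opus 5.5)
The bound has two sides, and I will prove them separately. The upper bound is a converse argument built on monotonicity of conditional mutual information. The lower bound is an achievability argument: one-way public discussion followed by privacy amplification. Throughout, $(A^n,B^n,C^n)$ is i.i.d. with single-letter law $P_{ABC}$. A protocol consists of local randomizations and public messages $F=(F_1,\dots,F_r)$, alternately sent by $A$ and $B$, after which $A$ outputs a key $K_A$ and $B$ outputs $K_B$. The requirements are $\Pr[K_A\neq K_B]\le\epsilon$ and $I(K:CF)\le\epsilon$, with $K$ nearly uniform.

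\emph{Converse.} The key object is the quantity $I(A^n:B^n|C^n)$, viewed as a resource monotone. I will show that conditioning on $C$ together with the public messages cannot increase the conditional mutual information between the two parties. Take a message $F_i$ sent by $A$, which is a function of $A$'s data $X$ and the previous messages. Then
\[
I(X:Y|ZF_{<i}) = I(XF_i:Y|ZF_{<i}) \ge I(X:Y|ZF_{\le i}),
\]
where $X,Y$ are the current data of $A$ and $B$ and $Z$ is $C$'s data. The equality holds because $F_i$ is a function of $X$ and $F_{<i}$. The inequality is the chain rule. Local randomization and local processing do not increase the quantity, by data processing; this is the classical version of the DPI quoted in Sec.~\ref{sec:prelim}.

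Iterating over all rounds gives
\[
nI(A:B|C)=I(A^n:B^n|C^n)\ge I(K_A:K_B|C^nF).
\]
Fano's inequality gives $I(K_A:K_B|C^nF)\ge H(K_A|C^nF)-n\delta_n$. Secrecy together with near-uniformity gives $H(K_A|C^nF)\ge \log|K|-\epsilon$. Dividing by $n$ and letting $\epsilon\to0$ yields $C_K\le I(A:B|C)$. The identity $I(A^n:B^n|C^n)=nI(A:B|C)$ uses only that the source is i.i.d.; this is also consistent with the subadditivity inequality for QCMI stated earlier.

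\emph{Achievability.} By symmetry it suffices to achieve $I(A:B)-I(A:C)$; swapping the roles of $A$ and $B$ then gives $I(A:B)-I(B:C)$. Since the larger of the two rates is achievable, the lower bound $I(A:B)-\min[I(A:C),I(B:C)]$ follows. The protocol has two stages. First, $A$ performs Slepian--Wolf information reconciliation: she publicly sends a random hash of $A^n$ of length about $n(H(A|B)+\delta)$. From this hash and $B^n$, $B$ decodes $A^n$ with vanishing error by joint-typicality decoding. Second, both parties apply privacy amplification: they use a two-universal hash, drawn with public randomness, to extract $\ell$ bits from $A^n$. By the leftover hash lemma, the output is nearly uniform and nearly independent of $(C^n,F)$ whenever
\[
\ell \lesssim H_{\min}^{\epsilon}(A^n|C^nF).
\]
To evaluate this, I will chain the smooth min-entropy, $H_{\min}^\epsilon(A^n|C^nF)\ge H_{\min}^\epsilon(A^n|C^n)-|F|$, and then apply the asymptotic equipartition property, $H_{\min}^\epsilon(A^n|C^n)\approx nH(A|C)$. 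This gives the rate
\[
H(A|C)-H(A|B)=I(A:B)-I(A:C).
\]

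The step I expect to be the most delicate is this secrecy analysis. It requires careful smoothing to handle the leakage from the reconciliation message $F$, and it requires making the error criterion and the secrecy criterion compatible with each other. I would rely on the standard leftover-hash-lemma bound with the AEP rather than on typical-set counting. The converse, by contrast, reduces to the single chain-rule inequality displayed above and should be routine.
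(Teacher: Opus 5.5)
Your proof is correct and has the same structure as the paper's sketch: for the lower bound, binning-based Slepian--Wolf reconciliation (the ``packing'' step) followed by privacy amplification; for the converse, monotonicity of $I(A:B|C)$ under local operations and public messages, combined with Fano's inequality (the classical counterpart of the Fannes-type continuity bound the paper cites). The one real difference is the decoupling step, where you use two-universal hashing, the leftover hash lemma and a smooth min-entropy AEP instead of the covering lemma. Your route reaches the same rate $I(A:B)-I(A:C)$ with strong secrecy and explicit one-shot bounds. The covering-lemma route instead keeps the whole argument in one typicality framework, which the paper reuses for its quantum theorem.
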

For the channel version, the model assumes a broadcast channel $W: A\ra B \times C$ 
and the state $(A^n,B^n,C^n)$ as the yields from $W^n$,
and the capacity needs to take an optimization over the input. 
The proof is lengthy but standard.
To prove the lower bound, it relies on the packing lemma to 
compress the source size, and the covering lemma to decouple $C$.
The upper bound relies on Fannes-type continuity inequality and data 
processing inequality. 

They further showed that 
when $A\ra B\ra C$ forms Markov chain or some other conditions, 
the upper bound reduces to the lower bound, 
as a private capacity. 
$C$ can be viewed as an evesdropper.
If instead $C$ is a centralized server and known to $B$, 
then the capacity reduces to 
the conditional mutual information $I(A:B|C)$. 
That is, the role of $C$ can be complementary: 
it can be treated as a server to boost the capacity, 
or it might be an evesdropper.
In either case, $C$ must be decoupled from $AB$.

\subsection{Conditional quantum channel capacity}

We now present the quantum version, which we call \emph{conditional quantum communication}.
It is based on quantum broadcast channel $\Phi: A \ra B \otimes C$~\cite{YHD11},
but here the goal is to maximize the communication between $A$ and $B$,
treating $C$ as an assistance (or `pilot'). 
Our capacity generalizes the private capacity for degradable broadcast channels~\cite{QSW18},
and is also different from the environment-assisted capacity~\cite{SVW05,Win05}.

To compute $I(A:B|C)$ in the channel setting, we replace $A$ by its purification $R$. 
The relevant state is $\rho^{RBC}$
after the channel $\Phi_{A\to BC}$ acts on the input $\rho^A$.
The goal is not merely to establish ebits;
debits (common randomness) are also allowed. 
An ebit can simulate a debit, and an ebit is secure due to monogamy of entanglement~\cite{Terhal04},
while a debit is not secure because correlations with other systems may exist unnoticed.

\begin{theorem}
[Conditional quantum channel capacity]
For a channel $\Phi$ from system $A$ to $B$ and $C$, 
a quantum communication aiming to maximize the quantum correlation 
between $A$ and $B$ has quantum capacity 
\bea Q_K(\Phi) &=& \frac{1}{2}\max_{\rho_{A}} I(A:B|C) \\ \nonumber
&=&\frac{1}{2}\max_{\rho_{A}} [S(RC)+S(BC)-S(C)-S(E)],\eea
for $R$ as the purification system of input $\rho_{A}$ and $E$ as the environment of $\Phi$.
\end{theorem}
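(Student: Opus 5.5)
The plan has two parts, a converse and an achievability bound, and both work through the source-channel duality stated above. That duality lets us work with the state $\rho^{RBC}=(\mathrm{id}_R\otimes\Phi)(\psi_{RA})$ and its $n$-fold tensor power instead of with the channel itself. In what follows, the target resource is $k$ ``conditional ebits''. Conditioned on a classical label held in $C$, each of these is a maximally entangled pair between the $R$-side (sender) and $B$. The ideal state is $\psi_1^{\otimes k}$, which has $I(R^k:B^k|C^k)=2k$.

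The second expression in the statement is just bookkeeping. Purify the channel with environment $E$, so that $RBCE$ is pure. Then $S(RBC)=S(E)$, and $I(R:B|C)=S(RC)+S(BC)-S(C)-S(RBC)$ gives the formula.

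\textbf{Converse.} Take any code of rate $k/n$ whose output $\omega$ on $R'B'C'$ is $\epsilon$-close to $\psi_1^{\otimes k}$, where $C'$ is whatever part of $C^n$ (or its processing) the protocol keeps as the conditioning system. The argument runs in four steps.
\begin{enumerate}
\item Continuity gives $2k \le I(R':B'|C')_\omega + n f(\epsilon)$, using the Fannes-type bound (Alicki--Fannes form for the conditional quantities) with $f(\epsilon)\to0$.
\item The decoding acts on $B^n$ and, when $C$ is available to $B$, on $B^nC^n$. We treat it as a channel on $B$ and use the data-processing inequality for the QCMI. If $B$'s decoder also touches $C$, we need DPI for maps acting on $BC$ while conditioning on a subsystem. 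This follows from the chain rule, $I(R:BC'|\cdot)$-type manipulations, and SSA.
\item The encoding is a preparation on $R^nA^n$ and only decreases the quantity, again by DPI.
\item This gives $2k\le I(R^n:B^n|C^n)+nf(\epsilon)$. Subadditivity of QCMI over tensor factors, stated in the preliminaries, bounds this by $\sum_i I(R_i:B_i|C_i)\le n\max_{\rho_A}I(R:B|C)$. Dividing by $2n$ and letting $\epsilon\to0$ gives $Q_K\le\frac12\max I(R:B|C)$.
\end{enumerate}
The factor $\frac12$ comes from each conditional ebit carrying QCMI $2$.

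\textbf{Achievability.} Fix the optimal $\rho_A$. The plan is to reduce to known protocols via the chain rule $I(R:B|C)=I(R:BC)-I(R:C)$. This is exactly the quantity appearing in state redistribution, where $C$ plays the role of side information. It is also the quantum analogue of Csisz\'ar's argument, with the packing lemma replaced by entanglement-assisted (mother/father) coding at rate $\frac12 I(R:BC)$ and the covering lemma replaced by decoupling from $C$ at cost $\frac12 I(R:C)$.

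Concretely, apply the state-merging / state-redistribution protocol to $\rho^{\otimes n}$. The receiver holds $BC$, and $C$ acts as the helper whose correlation with $R$ must be removed. Decoupling $R^n$ from the part of $C^n$ that is not jointly conditioned then leaves $n\cdot\frac12 I(R:B|C)$ conditional ebits between $R$ and $B$. The classical labels that remain in $C$ play the role of the register in $\psi_1$.

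The main obstacle is making this achievability precise. We must show that the residual correlation with $C$ can be arranged to be classical conditioning, as in $\psi_1$ and $\psi_2$, rather than simply discarded. We also need to account for any classical communication or entanglement consumption as part of the resource accounting. First I would try a one-shot decoupling bound: apply random unitaries on typical subspaces of $R^n$ and estimate the distance to $\psi_1^{\otimes k}$ via the conditional min-entropy $H_{\min}(R|C)$ minus $H_{\max}(R|BC)$. Its smoothed asymptotics give $\frac12[S(R|C)-S(R|BC)]=\frac12 I(R:B|C)$ by the quantum AEP.
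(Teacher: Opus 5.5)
\textbf{Step~2 of your converse fails.} Data processing for $I(R:B|C)$ holds only for channels on $R$ alone or on $B$ alone, with the conditioning system untouched. The QCMI is not monotone under operations on $C$ or joint maps on $BC$: the chain rule gives only $I(R:B'|C')\le I(R:BC)-I(R:C')$, and $I(R:C')$ can be far smaller than $I(R:C)$. The paper's own examples show this. Measuring $C$ turns $\psi_6$ (QCMI $0$) into $\psi_1$ (QCMI $2$), and swapping $B,C$ in $\Phi^+_{RC}\otimes|0\rangle\langle 0|_B$ raises it from $0$ to $2$. In your model ($C'$ any processing of $C^n$, decoder allowed on $B^nC^n$) the bound is actually false. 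The isometry $|\phi\rangle_A\mapsto\frac12\sum_i\sigma_i|\phi\rangle_B|i\rangle_C$, with $\sigma_i\in\{\I,X,Y,Z\}$, gives a pure $RBC$ with $I(R:B|C)=I(R:B)=0$ for every input, because $A\to B$ is completely depolarizing. Yet measuring $C$ turns each use into exactly $\psi_1$. So the converse needs the decoder on $B^n$ alone and conditioning on the unprocessed $C^n$ (isometries on $C^n$ are harmless). That is incompatible with your achievability, where the receiver holds $BC$ and decodes jointly. Steps~3--4 also need repair. A channel code may use inputs entangled across uses, so DPI does not reduce the encoder to $\rho^{\otimes n}$. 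The tensor-factor subadditivity from the preliminaries fails for non-product states: $A_1$ maximally entangled with $B_2$, all else trivial, gives left side $2$ and right side $0$. A correct single-letterization purifies the input (DPI on the $R$ side) and uses $I(R:B|C)=S(B|C)+S(B|E)$ for pure $RBCE$. Subadditivity of conditional entropy (from SSA) then gives $I(R':B^n|C^n)\le\sum_i I(R_i:B_i|C_i)$.

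\textbf{The deeper problem is your target resource.} Unassisted conditional ebits $\psi_1^{\otimes k}$ cannot in general be produced at rate $\frac12\max I(R:B|C)$. Already for trivial $C$ the formula reads $\frac12\max I(R:B)$, the entanglement-assisted quantum capacity. The completely dephasing qubit channel has $\frac12\max I(R:B)=\frac12$, but it is entanglement breaking, so after any local decoding $R'$ stays separable from $B'C'$. Since $\psi_1^{\otimes k}$ has $I_c(R'\rangle B'C')=k$, continuity of conditional entropy rules out every positive rate. Allowing entanglement does not help. With free ebits, counting output ebits is meaningless; once consumption is subtracted, net generation through an entanglement-breaking channel is $0$ (e.g.\ the father protocol nets $\frac12 I(R:B)-\frac12 I(R:E)=I_c(R\rangle B)=0$ here). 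So no decoupling estimate can close the gap, including your asserted $\frac12[H_{\min}(R|C)-H_{\max}(R|BC)]$. The paper's sketch uses a different resource. Its direct part invokes the EA/SA method to treat $k$ qubits as $2k$ bits and then runs Csisz\'ar's classical packing--covering argument. Its converse bounds debit distribution between $A$ and $B$ with $C$ decoupled, and the factor $\frac12$ is the ebit-versus-debit normalization. To make your route work, you must first fix a single operational model that both halves refer to: what $B$ may do with $C$, what assistance is free, and what output is counted.
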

\begin{proof}
(Direct) To prove the lower bound of $Q_K$, 
we use the packing lemma 
(to ensure reliable decoding at $B$ assisted by $C$) 
and the covering lemma (to guarantee that $C$'s state is sufficiently 
uncorrelated with the target correlation). 
Using the EA or SA method, one can treat $k$-qubit as 2$k$ bits
and then the lower bound follows from the classical packing–covering argument. \\
(Converse) To prove the upper bound of $Q_K$, 
one considers the debit distribution between $A$ and $B$ but with $C$ decoupled.
With similar method with the classical case based on the continuity of entropy,
data-processing inequality and subadditivity, 
the upper bound follows.
Note that debit distribution can be achieved by ebit distribution,
while the mutual information of an ebit is twice of a debit.
\end{proof}

This model differs from the EA model, 
wherein $C$ serves as the assisted entanglement. 
If $C$ belongs to the input instead of the output of the channel, 
from $I(R:B|C)=I(R:BC)-I(R:C)\leq I(R:BC)$, 
the capacity would reduce to the EA capacity.
For our model, $C$ does not belong to the purification of the input, 
and if the decoder can use it, 
it is possible to boost the capacity since $I(R:B|C)$
can be larger than $I(R:B)$.
It is also not a broadcast channel as here the side channel 
$A\ra C$ is treated as an assistance to the main channel $A\ra B$. 
We could also obtain a conditional classical capacity as twice of $Q_K$.

Furthermore, as in Csisz\'ar's model~\cite{AC93},
if $C$ is the evesdropper (say, the environment $E$ of the channel),
and when $A \ra B \ra C$ is a Markov chain, 
or the channel is degradable or has independent components,
the capacity becomes 
$\max_A [I(A:B)-I(A:E)]$, which is the optimal coherent information $\max_A I_c(A\rangle B)$.
One can also treat the conditional system $C$ and 
the environment $E$ as independent, 
and decoupling both would lead to a capacity $\max_A I_c(A\rangle BC)$.

It is well established that for a channel $\Phi$, 
$Q(\Phi)\leq P(\Phi) \leq C(\Phi)$, for the three as the quantum capacity,
private capacity, and classical capacity of $\Phi$, respectively~\cite{Wil17}. 
Neither one has `single-letter' formula, 
and moreover, $P(\Phi)$ is not even additive~\cite{LWZG09}.
Therefore, one choice to avoid such dilemma
is to employ a new model to define privacy of quantum channel.
Our model offers such a choice, in that 
$C$ plays the role of evesdropper while at the same time 
there is still an environment $E$. 
Only $C$ needs to be decoupled since $E$ is not useful for the evesdropper.
After all, assuming the evesdropper to fully control the environment is strong.
Then the capacity $Q_K$ is already private,
and the classical version is $2Q_K$.
However, $Q_K$ is not directly comparable to capacity of a usual channel 
since it is defined for broadcast channels.
It is also a bit subtle to realize quantum broadcast channels in practice.

We now check which states from the previous section are useful for our model. 
For $\psi_1$, $C$ can perform entanglement swapping and hold the classical outcome,
while $A$ sends the other half of its ebit to $B$.
The states $\psi_2$, $\psi_3$, and $\psi_4$ (GHZ) can be used to generate debits;
for example, tracing out $C$ from GHZ yields a debit between $A$ and $B$.
If $C$ is measured in the dual basis and the outcome is broadcast,
an ebit between $A$ and $B$ can be generated. 
The states $\psi_5$ and $\psi_6$ do not work since $C$ is not decoupled.

The QCMI has also been used for entanglement measure, notably, 
the squashed entanglement~\cite{CW03}.
However, the usage is distinct. 
As an entanglement measure, the squashed entanglement is superadditive, 
a feature that is undesirable for capacity measure. 
This actually motivated our search for a protocol beyond entanglement distribution,
and we use QCMI as a measure of quantum correlation.

\begin{table*}[t!]
    \centering
    \caption{Quantum protocol `family tree'.}
    \label{tab:family_tree}
    \setlength{\tabcolsep}{4pt}
    \renewcommand{\arraystretch}{1.5}\footnotesize
    \begin{tabular}{|c|c|c|c|}
        \hline
        & \textbf{Merging} & \textbf{Distillation} & \textbf{Channel} \\
        \hline
        \textbf{Unassisted} 
        & $\rho^{A|B} + S(A|B)e + I(A:E)c \ge \rho^{AB}$
        & $\rho +  I(A:E)c \ge S(A|B)e$
        & $\Phi \ge S(A|B)q$ \\
        \hline
        \textbf{EA} 
        & $\rho^{A|B} + \frac12 I(A:E)q \ge \frac12 I(A:B)e + \rho^{AB}$
        & $\rho + \frac12 I(A:E)q \ge \frac12 I(A:B)e$
        & $\Phi + \frac12 I(A:E)e \ge \frac12 I(A:B)q$ \\
        \hline
        \textbf{Conditional} 
        & $\rho^{A|_CB}  + \frac12 I(A:E|C)q \ge \frac12 I(A:B|C)e +\rho^{ABC}$
        & $\rho + \frac12 I(A:E|C)q \ge \frac12 I(A:B|C)e$
        & $\Phi + \frac12 I(A:E|C)e \ge \frac12 I(A:B|C)q$ \\
        \hline
    \end{tabular}
\end{table*}

\section{Family of coding models}\label{sec:relations}

In this section, we study the family of coding models and analyze how our new models fit into 
the family tree. 
A convenient way to express models is via resource inequality~\cite{DHW08}.
All asymptotic rates are per channel use (or per copy of the resource state). 
We consider three basic noiseless resources: 
noiseless channel $q$, classical channel $c$, and ebit $e$.
The following noiseless resource inequalities hold:
\begin{align}
e + 2c &\ge q, \quad \text{(teleportation)} \\
e + q &\ge 2c, \quad \text{(superdense coding)} \\
q &\ge e. \quad \text{(entanglement distribution)}
\end{align}
Also we divide models into three classes, shown in Table~\ref{tab:family_tree},
for source coding (merging), state conversion (distillation),
and communication over a channel. 
For simplicity, we use $\ge$ to mean achievability,
exact or asymptotic.

First, consider the merging class. 
We denote a bipartite state before and after the merge as 
$\rho^{A|B}$ and $\rho^{AB}$, respectively, and label its purification system as $E$.
The coherent state merging, also known as the fully quantum Slepian–Wolf protocol, gives
\begin{equation}
\rho^{A|B}
+ \frac12 I(A:E)\,q
\;\ge\; \frac12 I(A:B)\,e \;+\; \rho^{AB}.
\end{equation}
This is an EA protocol. 
Replacing $q$ by $c$ yields the standard state merging:
\begin{equation}
\rho^{A|B} + I(A:E)\,c
\;\ge\; S(A|B)\,e + \rho^{AB}.
\end{equation}
The EA version distills ebits, while the unassisted version may consume 
ebits depending on the sign of $S(A|B)$.

With conditioning, state redistribution enacts the merge $\rho^{AC|B} \to \rho^{C|AB}$~\cite{DY08}.
Here, the side system $C$ is accessible to both $A$ and $B$, as in our conditional communication model.
In other words, it also requires the merge $\rho^{A|BC} \to \rho^{ABC}$.
We denote this conditional merge as $\rho^{A|_CB} \to \rho^{ABC}$.
The total ebit cost (or gain) is $\frac12[S(A|B)+S(A|BC)]$.
The conditional merge inequality is
\begin{equation}
\rho^{A|_CB} + I(A:E|C)\,c
\;\ge\; \frac12[S(A|B)+S(A|BC)]\,e + \rho^{ABC}.
\end{equation}
Its coherent version is
\begin{equation}
\rho^{A|_CB} + \frac12 I(A:E|C)\,q
\;\ge\; \frac12 I(A:B|C)\,e + \rho^{ABC}.
\end{equation}

Merging can be used to distill ebits, yielding the EA ebit distillation (the ``mother'' protocol):
\begin{equation}
\rho + \frac12 I(A:E)\,q \;\ge\; \frac12 I(A:B)\,e.
\end{equation}
Its dual is the EA quantum communication over a channel (the ``father'' protocol):
\begin{equation}
\Phi + \frac12 I(A:E)\,e 
\;\ge\; \frac12 I(A:B)\,q.
\end{equation}
Replacing $q$ by $c$ gives $\Phi + S(A)\,e \ge I(A:B)\,c$.
Its dual is noisy superdense coding:
$\rho + S(A)\,q \ge I(A:B)\,c$.
Using $q \ge e$, we get $\Phi \ge I_c(A\rangle B)\,e$,
and its dual is noisy teleportation:
$\rho + I(A:B)\,c \ge I_c(A\rangle B)\,q$,
modulo classical communication.

\section{Example channels}\label{sec:examples}

In this section, we construct example channels to illustrate the usage of QCMI. 
First, we note that if a broadcast channel $A \to B \otimes C$ is constructed from 
$A \to B$ and $B \to C$, or $A \to B$ and $A \to C$, 
then it suffers from a \emph{bottleneck} effect, reducing the capacity to the minimum 
of the two bipartite channels.  
To benefit from the interference effect within a broadcast channel, 
one needs genuinely multipartite channels.

Few explicit forms of broadcast channels are known. 
First, we analyze a setting where a broadcast channel can arise.
In quantum computing, logical qubits are encoded into
many physical qubits.
Correlated noise acting at the physical level
can induce effective correlations between logical qubits. 
Instead of treating this as purely detrimental, one can
sacrifice one logical qubit as an auxiliary system to boost the
accessible information of another logical qubit, 
via an \emph{interference} effect.

Suppose two logical qubits $L_1$ and $L_2$ undergo a joint noisy
evolution described by a two-qubit channel $\mathcal{N}: L_1 L_2 \to B C$,
where $B$ is the output of $L_1$ and $C$ is the output of $L_2$.
Because the noise is correlated, $C$ carries diagnostic information
about the errors that occurred on $B$. 
As an illustration, consider a two-qubit channel modeling
correlated bit-flip and phase-flip errors, with Kraus operators
\begin{align}
K_0 &= \sqrt{1-p-q}\, \I \otimes \I, \\
K_1 &= \sqrt{p}\, X \otimes X, \\
K_2 &= \sqrt{q}\, Z \otimes Z,
\end{align}
where $p,q \ge 0$ and $p+q \le 1$. 
For a maximally mixed input $\rho_{L_1L_2} = \I/4$, 
we find $I(R:B|C) = 2$ and $I(R:B) = 2 - H_3(1-p-q,p,q)$.
Thus $C$ provides perfect error diagnosis: it tells the decoder whether an $XX$ or $ZZ$
error (or neither) has occurred, unlocking the full two bits
of quantum correlations that are hidden when $C$ is discarded.

Next, we illustrate a scheme using ``environment splitting'' to construct broadcast channels from standard ones. 
Let $\mathcal{N}_1: A \to B$ be a noisy quantum channel with isometric extension
$V_1: \mathcal{H}_A \to \mathcal{H}_B \otimes \mathcal{H}_E$, so that
$\mathcal{N}_1(\rho) = \operatorname{Tr}_E (V_1 \rho V_1^\dagger)$.
Introduce a second isometry $V_2: \mathcal{H}_E \to \mathcal{H}_C \otimes \mathcal{H}_{E'}$
that splits the environment into an auxiliary system $C$ (given to the receiver)
and a discarded remainder $E'$. The overall isometry is
\begin{equation}
U = (\I_B \otimes V_2) \circ V_1 : \mathcal{H}_A \to \mathcal{H}_B \otimes \mathcal{H}_C \otimes \mathcal{H}_{E'},
\end{equation}
yielding the broadcast channel
\begin{equation}
\mathcal{N}_{\text{bc}}(\rho) = \operatorname{Tr}_{E'} (U \rho U^\dagger), \qquad
\mathcal{N}_{\text{bc}} : A \to BC.
\end{equation}

Let $R$ purify the input state $\rho_A$. The global state
$\ket{\Psi}_{RBCE'} = (I_R \otimes U)\ket{\psi}_{RA}$ is pure.
Discarding $E'$ yields a mixed state on $RBC$, and the conditional mutual
information is
\begin{equation}
I(R:B|C) = S(RC) + S(BC) - S(C) - S(E').
\end{equation}
The unconditional QMI of the original channel is
\begin{equation}
I(R:B) = S(R) + S(B) - S(RB).
\end{equation}
We define the \emph{gain} as
\begin{equation}
\Delta \equiv I(R:B|C) - I(R:B).
\end{equation}
For capacity, each quantity above should be optimized over the input,
but here our goal is to illustrate the possible gap.

We consider a rank-2 qubit channel with Kraus operators 
\begin{equation}
K_0 = \begin{pmatrix} \sqrt{1-p} & 0 \\ 0 & \sqrt{1-q} \end{pmatrix}, \qquad
K_1 = \begin{pmatrix} 0 & \sqrt{q} \\ \sqrt{p} & 0 \end{pmatrix},
\end{equation}
parameterized by $p,q \in [0,1]$.
Special cases include the amplitude damping (AD) channel ($q=0, p=\gamma$),
the dephasing channel (De.) ($p=q=\lambda$), and the bit-flip channel ($p=q=\lambda$).

We now consider four splitting combinations, shown in Table~\ref{tab:numerical}.
We take the input to be maximally mixed, $\rho_A = \I/2$, 
which is near-optimal for the channels considered.
The gain functions for the four cases are:
\begin{align}
\text{AD + AD:} \quad
\Delta &= H_2(\gamma\lambda/2)
- H_2(\gamma(1-\lambda)/2) \notag \\
&\quad + 2 - H_2((1-\gamma)/2)
- \tfrac12 H_2(\gamma), \\[4pt]
\text{De. + De.:} \quad
\Delta &= 2H_2(p) - H_2(q), \\[4pt]
\text{De. + AD:} \quad
\Delta &= H_2(p\lambda) - H_2(p(1-\lambda)) + H_2(p), \\[4pt]
\text{AD + De.:} \quad
\Delta &= H_2(\gamma/2) - H_2(q) + 2 \notag \\
&\quad - H_2((1-\gamma)/2)
- \tfrac12 H_2(\gamma).
\end{align}

\begin{table}[t!]
\centering
\caption{Numerical values of $I(R:B)$, $I(R:B|C)$, and the gain $\Delta$
for the four splitting combinations.}
\label{tab:numerical}
\begin{tabular}{|l|ccc|}
\hline
Case (parameters) & $I(R:B)$ & $I(R:B|C)$ & $\Delta$ \\ \hline 
AD + AD ($\gamma=0.3, \lambda=0.2$)   & 1.374 & 1.504 & 0.130 \\
De. + De. ($p=0.1, q=0.2$)         & 1.531 & 1.748 & 0.217 \\
De. + AD ($p=0.2, \lambda=0.3$)     & 1.278 & 1.743 & 0.465 \\
AD + De. ($\gamma=0.3, q=0.2$)      & 1.374 & 1.887 & 0.513 \\ \hline
\end{tabular}
\end{table}

Physically, the broadcast channel constructed above may appear artificial, 
yet there may be practical settings for them. 
For instance, in quantum optics, the AD channel can be induced by a beam splitter,
which splits one mode into two. 
An additional beam splitter can generate a third mode, which may serve as the environmental mode.
This suggests that if some information from the environment can be used, 
it can benefit the coding rate.
This could also be relevant in quantum metrology, 
where a `pilot' signal is used to help the main signal detect some unknown process.

\section{Conclusion}\label{sec:conclusion}

In this work, we have introduced a new quantum channel capacity,
the conditional quantum capacity,
expressed by the quantum conditional mutual information (QCMI). 
This result bridges a long-standing gap between the static interpretation of QCMI, 
established through state redistribution, 
and its dynamic role as a genuine channel capacity. 
By treating the third system \(C\) as an auxiliary output that assists decoding, 
we have shown that the optimal rate for establishing quantum correlation 
between the input and output is given by half the QCMI. 
This positions QCMI as a fundamental quantity in the whole family of quantum coding protocols.

Since QCMI is intimately related to privacy and secrecy, 
our framework has direct applications in quantum key distribution, 
entanglement swapping, and quantum repeater networks. 
In a multi-node network, 
the conditional system \(C\) can be interpreted as the collective information 
held by intermediate nodes or environmental monitors, 
and the conditional capacity characterizes the maximum rate at which 
end nodes can establish secure correlations in the presence of such auxiliary information. 

Our work also offers new methods for code design. 
Unlike conventional quantum error correction, 
which typically treats noise as an independent and adversarial process, 
the conditional communication framework explicitly exploits the structure of correlated noise. 
When the environment or an auxiliary system carries diagnostic information about 
errors-for instance, through a broadcast channel with correlated bit-flip 
and phase-flip errors-that information can be used to boost the communication rate 
beyond what is achievable by discarding the auxiliary output. 
This suggests that QCMI-based coding may outperform standard error-correcting codes,
e.g., with higher thresholds,
in settings where noise correlations are present, 
such as in multi-qubit gates, crosstalk-dominated architectures, and photonic networks.

\section{Acknowledgement}

This work has been funded by
the National Natural Science Foundation of China under Grants
12447101 and 12105343.
Discussion with D. Yang at an early stage of this work is greatly acknowledged.

\end{spacing}

\bibliography{ext}{}
\bibliographystyle{elsarticle-num}

\end{document}